\documentclass[lettersize,journal]{IEEEtran}
\usepackage{amsfonts}
\usepackage{amsmath}
\usepackage{amssymb}
\usepackage{amsthm}
\usepackage{epsfig}
\usepackage{multicol}
\usepackage{booktabs}
\usepackage{lettrine}
\usepackage{float}
\usepackage{footmisc}
\usepackage{epstopdf}
\usepackage[compress]{cite}
\usepackage{times}
\usepackage{stfloats}
\usepackage{mathtools}
\usepackage{algorithm}
\usepackage{algorithmic}
\usepackage{multirow}
\usepackage{color}
\usepackage{subfigure}
\usepackage{diagbox}
\usepackage[caption=false,font=normalsize,labelfont=sf,textfont=sf]{subfig}
\usepackage{makecell}
\usepackage{colortbl}
\usepackage{xcolor}
\usepackage{array}
\usepackage{arydshln}

\newtheorem{assumption}{Assumption}

\newtheorem{theorem}{Theorem}
\newtheorem{remark}{Remark}
\newtheorem{lemma}{Lemma}
\newtheorem{proposition}{Proposition}
\newtheorem{definition}{Definition}
\newtheorem{corollary}{Corollary}

\allowdisplaybreaks[4]
\IfFileExists{tcilatex.tex}{\input{tcilatex}}{}
\providecommand{\func}[1]{\operatorname{#1}}
\begin{document}

\title{Data-Driven Critic-Free Policy Iteration for Continuous-Time
Linear Quadratic Regulation \thanks{
}}
\author{Jiacheng Wu, {Yang~Zhu}\IEEEauthorrefmark{1}, ~Hongye~Su,
\IEEEmembership{Senior Member,~IEEE}

\thanks{
Jiacheng Wu, Yang Zhu, and Hongye Su are with State Key Laboratory of Industrial Control Technology,
Institute of Cyber-Systems and Control, Zhejiang University, Hangzhou
310027, China (e-mail: jiachengwu@zju.edu.cn;
zhuyang88@zju.edu.cn;
hysu@iipc.zju.edu.cn;).}

\thanks{\textsuperscript{*}Corresponding author: Yang Zhu.}%

}
\maketitle

\begin{abstract}
For continuous-time linear quadratic regulation with unknown system matrices, data-driven off-policy policy iteration typically estimates the value matrix and the
improved feedback gain through a joint critic--actor regression. We show that
the critic is not needed in the policy-improvement step. The key is to anchor
the Riccati equation at a known stabilizing gain and express optimality as a
policy-space residual. An endpoint null-space projection then removes the
value-matrix term from the integral data equation. This yields a critic-free,
actor-only least-squares update computed directly from input-state data.
Under a verifiable projected rank condition, the resulting data equation is
equivalent to the policy-space residual equation, and each update coincides
with the Kleinman iteration. Thus, the stabilizing and convergence
properties of Kleinman iteration are retained without a critic regression. We further show that the conventional off-policy
full-rank condition decomposes into an endpoint critic rank condition and a
projected actor rank condition. The proposed method removes the rank
requirement needed for critic identification while retaining the one needed
for policy improvement. The repeated least-squares dimension is reduced from
$n(n+1)/2+mn$ to $mn$. Finally, comparative simulations validate the effectiveness of the proposed
algorithm.
\end{abstract}

\begin{IEEEkeywords}
Policy iteration, linear quadratic regulation, data-driven control, critic-free reinforcement learning, endpoint null-space projection.
\end{IEEEkeywords}

\section{Introduction}
\IEEEPARstart{C}ONTINUOUS-time linear quadratic regulation is a fundamental problem in optimal control \cite{lewis2012optimal}. When the system matrices are known, the optimal feedback gain can be obtained by solving the algebraic Riccati equation (ARE). In many practical applications, however, accurate knowledge of the system matrices is unavailable, which has motivated extensive research on data-driven optimal control \cite{zhao2025data}, reinforcement learning (RL) \cite{cui2025learning}, and adaptive dynamic programming \cite{liu2025adaptive}.

For this problem, data-driven policy iteration (PI) provides a natural way to implement Riccati-based optimal control from data \cite{vamvoudakis2014online,alemzadeh2024data,chen2019reinforcement}. Starting from an admissible feedback gain, PI alternates between policy evaluation and policy improvement, and its model-based form is closely related to the classical Kleinman iteration. In the data-driven setting, Vrabie \emph{et al.}~\cite{vrabie2009adaptive} introduced an on-policy PI scheme based on integral RL, where the feedback gain is updated from closed-loop input-state data rather than by directly solving the ARE. This approach reduces the reliance on a complete system model, but its policy-improvement step still requires partial knowledge of the system dynamics. To further reduce model dependence, Jiang \emph{et al.}~\cite{jiang2012computational} developed an off-policy PI framework in which integral regression equations are constructed from trajectories generated by a behavior input, so that the unknown system matrices need not be explicitly identified. This framework has motivated a broad range of extensions and applications, including inverse optimal control \cite{LianBS2024BOOK}, permanent magnet synchronous motor \cite{zhao2025reinforcement}, singularly perturbed systems \cite{shen2024data}, and networked \cite{jiang2025off} or multi-agent systems~\cite{xu2026distributed,donge2022multiagent}. Nevertheless, the resulting regression treats the value matrix of the current policy and the improved feedback gain as simultaneous unknowns. As a result, each policy update still requires a joint critic--actor regression for estimating both the value matrix and the feedback gain.

The joint critic--actor structure raises a fundamental question: \textit{is explicit critic identification indispensable for policy improvement?} In conventional off-policy PI \cite{jiang2012computational}, the value matrix enters the integral Bellman equation through the policy-evaluation term, whereas the policy-improvement step ultimately aims to compute only the next feedback gain. Hence, identifying the value matrix may introduce more information than is actually needed for the actor update. This distinction has direct computational and data-informativity implications. Since the critic contains \(n(n+1)/2\) unknowns whereas the actor contains only \(mn\), the critic component can dominate the regression size and impose additional rank requirements, particularly for high-dimensional systems or limited data batches. Related policy-space viewpoints have appeared in model-based studies. Bu \emph{et al.}~\cite{bu2020policy} studied direct policy-gradient methods for continuous-time linear quadratic regulation by optimizing the cost over stabilizing feedback gains, shifting the emphasis from Riccati-based value-function computation to policy-space optimization. However, evaluating the policy gradient still requires the value matrix associated with the current policy. More recently, Sassano~\cite{sassano2024policy} introduced policy algebraic equations for \(H_{\infty}\) control, and Possieri and Sassano~\cite{possieri2026solving} derived a quadratic Policy Algebraic Riccati Equation for  linear quadratic regulation whose unknowns are only the entries of the feedback gain. These results show that critic-free algebraic characterizations of optimal policies are possible when the model is known.

However, these model-based policy-space results \cite{bu2020policy,sassano2024policy,possieri2026solving} do not resolve the critic problem in data-driven off-policy PI. In the off-policy integral regression setting, the available identities are obtained from behavior trajectories, and the unknown value matrix appears in endpoint terms of the data equation. Simply discarding these terms would destroy the equivalence with Riccati-based policy improvement. Therefore, a critic-free data-driven method must eliminate the value-matrix contribution from the finite-data equation while preserving enough information to recover the same policy-improvement gain.
Motivated by the above observations, this paper develops a data-driven critic-free off-policy PI method for continuous-time linear quadratic regulation. The main contributions are summarized as follows.

\begin{itemize}
\item We develop a data-driven critic-free off-policy PI formulation for continuous-time linear quadratic regulation. By combining an anchored policy-space Riccati equation with an endpoint null-space projection, the value-matrix term is eliminated exactly from the finite-data equation, leading to an actor-only regression.

\item We prove that the proposed critic-free update is not an approximation of policy improvement. Under a verifiable projected rank condition, it uniquely recovers the Kleinman update. Thus, the stabilizing and convergence properties of  PI are retained without critic identification.

\item We clarify the data-informativity burden hidden in conventional off-policy PI. Its full-rank condition is decomposed into a critic rank condition and an actor rank condition, showing that the proposed method removes the rank requirement for value-matrix identification and reduces the repeated regression dimension from \(n(n+1)/2+mn\) to \(mn\).
\end{itemize}

\noindent\textbf{Notation:}
Throughout this article, $\mathbb R^n$ and $\mathbb R^{m\times n}$ denote
the $n$-dimensional Euclidean space and the set of $m\times n$ real
matrices, respectively. The symbol $I_n$ denotes the $n\times n$ identity matrix. The zero scalar, vector,
or matrix of compatible dimension is denoted by $\mathbf 0$. For a matrix
$\mathbf A$, $\mathbf A^T$, $\operatorname{rank}(\mathbf A)$,
$\ker(\mathbf A)$, $\operatorname{im}(\mathbf A)$, and
$\operatorname{Tr}(\mathbf A)$ denote its transpose, rank, kernel, image,
and trace, respectively. The notation $\operatorname{Re}(\lambda)$ denotes
the real part of a complex number $\lambda$.  
For a matrix $\mathbf A=[a_{ij}]\in\mathbb R^{m\times n}$,
$\operatorname{vec}(\mathbf A)$ denotes its column-stacking vectorization.
For a symmetric matrix $\mathbf P=[p_{ij}]\in\mathbb R^{n\times n}$, define
$
\operatorname{vecs}(\mathbf P)
=
[p_{11},2p_{12},\ldots,2p_{1n},p_{22},2p_{23},\ldots,p_{nn}]^T .
$
For a symmetric matrix $\mathbf S=[s_{ij}]\in\mathbb R^{n\times n}$, define
$
\operatorname{vech}(\mathbf S)
=
[s_{11},s_{12},\ldots,s_{1n},s_{22},s_{23},\ldots,s_{nn}]^T .
$
For vectors or matrices with compatible dimensions,
$\operatorname{col}(\cdot)$ denotes their column concatenation. For a
differentiable matrix-valued mapping $\mathcal M$, $D\mathcal M(K)[E]$
denotes the Fr\'echet derivative of $\mathcal M$ at $K$ in the direction
$E$.

\section{PRELIMINARIES AND PROBLEM STATEMENT}

Consider the continuous-time linear system
\begin{equation}
\dot{x}(t)=\mathbf{A}x(t)+\mathbf{B}u(t),  \label{1}
\end{equation}
where $x(t)\in \mathbb{R}^{n}$ and $u(t)\in \mathbb{R}^{m}$ denote the state
and control input, respectively. The constant matrices
$\mathbf{A}\in \mathbb{R}^{n\times n}$ and
$\mathbf{B}\in \mathbb{R}^{n\times m}$ are unknown. The performance index is
\begin{equation}
J(x_{0},u)=\int_{0}^{\infty }\left[
x^{T}(t)\mathbf{Q}x(t)+u^{T}(t)\mathbf{R}u(t)\right] \mathrm{d}t,  \label{2}
\end{equation}
where $\mathbf{Q}=\mathbf{Q}^{T}\geq \mathbf{0}$ and
$\mathbf{R}=\mathbf{R}^{T}>\mathbf{0}$ are known.

\begin{assumption}
The pair $(\mathbf{A},\mathbf{B})$ is stabilizable, the pair
$(\mathbf{A},\mathbf{Q}^{1/2})$ is detectable, and a gain
$\mathbf{F}\in \mathbb{R}^{m\times n}$ such that
$\mathbf{A}-\mathbf{B}\mathbf{F}$ is Hurwitz is available.
\end{assumption}

Under Assumption 1, the algebraic Riccati equation (ARE)
\begin{equation}
\mathbf{A}^{T}\mathbf{P}+\mathbf{P}\mathbf{A}
-\mathbf{P}\mathbf{B}\mathbf{R}^{-1}\mathbf{B}^{T}\mathbf{P}
+\mathbf{Q}=\mathbf{0}  \label{3}
\end{equation}
has a unique stabilizing solution
$\mathbf{P}^{\ast }=\mathbf{P}^{\ast T}\geq \mathbf{0}$. The corresponding
optimal policy is
\begin{equation}
u^{\ast }(t)=-\mathbf{K}^{\ast }x(t)
=-\mathbf{R}^{-1}\mathbf{B}^{T}\mathbf{P}^{\ast }x(t).  \label{4}
\end{equation}

Because $\mathbf{A}$ and $\mathbf{B}$ are unavailable, the ARE \eqref{3} cannot be
solved directly. Conventional off-policy PI avoids explicit
system identification by estimating the value matrix $\mathbf{P}_{k}$ and
the improved gain $\mathbf{K}_{k+1}$ simultaneously
\cite{jiang2012computational}. For an arbitrary behavior input $u(t)$,
\eqref{1} can be written as
\begin{equation}
\dot{x}(t)=(\mathbf{A}-\mathbf{B}\mathbf{K}_{k})x(t)
+\mathbf{B}\left[u(t)+\mathbf{K}_{k}x(t)\right].  \label{5}
\end{equation}
Let $\mathbf{P}_{k}$ be the value matrix associated with the stabilizing
policy $\mathbf{K}_{k}$, and define
$\mathbf{Q}_{k}\triangleq
\mathbf{Q}+\mathbf{K}_{k}^{T}\mathbf{R}\mathbf{K}_{k}$ and
$\mathbf{K}_{k+1}\triangleq \mathbf{R}^{-1}\mathbf{B}^{T}\mathbf{P}_{k}$.
Differentiating $x^{T}(t)\mathbf{P}_{k}x(t)$ along \eqref{5} yields
\begin{eqnarray}
&&x^{T}(t_{j+1})\mathbf{P}_{k}x(t_{j+1})
-x^{T}(t_{j})\mathbf{P}_{k}x(t_{j}) \notag \\
&=&-\int_{t_{j}}^{t_{j+1}}x^{T}(\tau )\mathbf{Q}_{k}x(\tau )\mathrm{d}\tau
\notag \\
&&\quad
+2\int_{t_{j}}^{t_{j+1}}
\left[u(\tau )+\mathbf{K}_{k}x(\tau )\right]^{T}
\mathbf{R}\mathbf{K}_{k+1}x(\tau )\mathrm{d}\tau .  \label{6}
\end{eqnarray}
This identity holds for any behavior input. In this article, the data are
collected under
$u(t)=-\mathbf{F}x(t)+v(t)$, where $v(t)\in \mathbb{R}^{m}$ is an exploration
signal and $\mathbf{F}$ is the stabilizing gain specified in Assumption 1.

Given $0\leq t_{0}<t_{1}<\cdots <t_{N}$, define
$\bar{x}(t)\triangleq \func{vech}(x(t)x^{T}(t))$ and
$r\triangleq n(n+1)/2$. Then
\begin{eqnarray*}
\boldsymbol{\delta }_{xx} &\triangleq &\left[
\begin{array}{c}
\bar{x}^{T}(t_{1})-\bar{x}^{T}(t_{0}) \\
\vdots \\
\bar{x}^{T}(t_{N})-\bar{x}^{T}(t_{N-1})
\end{array}
\right], \\
\mathbf{I}_{xx} &\triangleq &\left[
\begin{array}{c}
\displaystyle\int_{t_{0}}^{t_{1}}x^{T}(\tau )\otimes x^{T}(\tau )\mathrm{d}\tau \\
\vdots \\
\displaystyle\int_{t_{N-1}}^{t_{N}}x^{T}(\tau )\otimes x^{T}(\tau )\mathrm{d}\tau
\end{array}
\right], \\
\mathbf{I}_{xu} &\triangleq &\left[
\begin{array}{c}
\displaystyle\int_{t_{0}}^{t_{1}}x^{T}(\tau )\otimes u^{T}(\tau )\mathrm{d}\tau \\
\vdots \\
\displaystyle\int_{t_{N-1}}^{t_{N}}x^{T}(\tau )\otimes u^{T}(\tau )\mathrm{d}\tau
\end{array}
\right].
\end{eqnarray*}
Stacking \eqref{6} gives
\begin{equation}
\boldsymbol{\Theta }_{k}
\left[
\begin{array}{c}
\mathrm{vecs}(\mathbf{P}_{k}) \\
\func{vec}(\mathbf{K}_{k+1})
\end{array}
\right]
=-\mathbf{I}_{xx}\func{vec}(\mathbf{Q}_{k}),  \label{7}
\end{equation}
where
\begin{equation*}
\boldsymbol{\Theta }_{k}\triangleq
\left[
\begin{array}{cc}
\boldsymbol{\delta }_{xx} &
-2\mathbf{I}_{xx}(\mathbf{I}_{n}\otimes \mathbf{K}_{k}^{T}\mathbf{R})
-2\mathbf{I}_{xu}(\mathbf{I}_{n}\otimes \mathbf{R})
\end{array}
\right].
\end{equation*}
\begin{assumption}
The collected data satisfy
\textbf{$\func{rank}(\boldsymbol{\Theta }_{k})=r+mn$}.
\end{assumption}

If $\func{rank}(\boldsymbol{\Theta }_{k})=r+mn$, then
$\mathbf{P}_{k}$ and $\mathbf{K}_{k+1}$ are uniquely determined by joint critic-actor regression \eqref{7}. Thus,
although conventional off-policy PI is model-free, it still
requires an $(r+mn)$-dimensional joint critic--actor regression \eqref{7}. The
objective of the following development is to remove the critic variable from
the policy-improvement equation while retaining a data-driven
implementation.

\section{DATA-DRIVEN CRITIC-FREE RL}

The development proceeds through an anchored policy-space Riccati equation
and an endpoint null-space projection. The former recasts the optimality
condition as a residual equation in the policy variable, while the latter
removes the critic-dependent endpoint term from the data equation.

We first introduce an anchored Riccati equation in the policy space. Let $\mathbf{A}_{F}\triangleq \mathbf{A}-\mathbf{B}\mathbf{F}$. Since $\mathbf A_F$ is Hurwitz, the Lyapunov operator
$\mathcal L_F(\mathbf P)\triangleq
\mathbf A_F^T\mathbf P+\mathbf P\mathbf A_F
$
is nonsingular on the space of symmetric matrices. This allows every policy
$\mathbf K\in\mathbb R^{m\times n}$ to be lifted to a unique symmetric
matrix through a fixed stable Lyapunov equation. This motivates the following definition.

\begin{definition}
For each $\mathbf K\in\mathbb R^{m\times n}$, let
$\mathbf P_F(\mathbf K)=\mathbf P_F^T(\mathbf K)$ be the unique solution of
\begin{equation}
\mathbf A_F^T\mathbf P_F(\mathbf K)
+\mathbf P_F(\mathbf K)\mathbf A_F
=
\mathbf H_F(\mathbf K),
\label{8}
\end{equation}
where
\begin{equation}
\mathbf H_F(\mathbf K)\triangleq
(\mathbf K-\mathbf F)^T\mathbf R(\mathbf K-\mathbf F)
-\mathbf Q-\mathbf F^T\mathbf R\mathbf F .
\label{9}
\end{equation}
Define the anchored policy-space residual
\begin{equation}
\mathbf G_F(\mathbf K)\triangleq
\mathbf R\mathbf K-\mathbf B^T\mathbf P_F(\mathbf K).
\label{10}
\end{equation}
The equation $\mathbf G_F(\mathbf K)=\mathbf 0$ is called the anchored policy-space Riccati equation associated with the
anchor $\mathbf F$.
\end{definition}

\begin{lemma}
The equation $\mathbf{G}_{F}(\mathbf{K})=\mathbf{0}$ is equivalent to the ARE \eqref{3} together with
$\mathbf{K}=\mathbf{R}^{-1}\mathbf{B}^{T}\mathbf{P}$. Consequently,
$\mathbf{K}^{\ast}$ is its unique stabilizing solution.
\end{lemma}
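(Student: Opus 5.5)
The key is a single algebraic identity: for any symmetric $\mathbf P$ and any $\mathbf K$,
\begin{equation*}
\mathbf A^T\mathbf P+\mathbf P\mathbf A-\mathbf P\mathbf B\mathbf R^{-1}\mathbf B^T\mathbf P+\mathbf Q
=
\mathcal L_F(\mathbf P)-\mathbf H_F(\mathbf K)
+(\mathbf R\mathbf K-\mathbf B^T\mathbf P)^T\mathbf R^{-1}(\mathbf R\mathbf K-\mathbf B^T\mathbf P)
\end{equation*}
\begin{equation*}
\qquad\qquad -\,(\mathbf K-\mathbf F)^T(\mathbf R\mathbf K-\mathbf B^T\mathbf P)-(\mathbf R\mathbf K-\mathbf B^T\mathbf P)^T(\mathbf K-\mathbf F).
\end{equation*}
I will verify it by direct expansion. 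First substitute $\mathcal L_F(\mathbf P)=\mathbf A^T\mathbf P+\mathbf P\mathbf A-\mathbf F^T\mathbf B^T\mathbf P-\mathbf P\mathbf B\mathbf F$ and $\mathbf H_F(\mathbf K)$ from \eqref{9}. Then write $\mathbf B^T\mathbf P=\mathbf R\mathbf K-\mathbf G$ with $\mathbf G=\mathbf R\mathbf K-\mathbf B^T\mathbf P$ and collect terms; the $\mathbf F$-cross terms and the $\mathbf K^T\mathbf R\mathbf K$ terms cancel. This bookkeeping is the only computational step, and I expect it to be routine, though it is where sign errors are most likely.

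\emph{Forward direction.} Suppose $\mathbf G_F(\mathbf K)=\mathbf 0$ and set $\mathbf P=\mathbf P_F(\mathbf K)$. Then $\mathcal L_F(\mathbf P)=\mathbf H_F(\mathbf K)$ by \eqref{8}, and the residual $\mathbf G$ vanishes, so the identity gives the ARE \eqref{3} with $\mathbf K=\mathbf R^{-1}\mathbf B^T\mathbf P$.

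\emph{Converse.} Suppose $\mathbf P$ solves \eqref{3} and $\mathbf K=\mathbf R^{-1}\mathbf B^T\mathbf P$. Then $\mathbf G=\mathbf 0$ and the identity yields $\mathcal L_F(\mathbf P)=\mathbf H_F(\mathbf K)$. Since $\mathcal L_F$ is nonsingular on symmetric matrices, $\mathbf P=\mathbf P_F(\mathbf K)$, and hence $\mathbf G_F(\mathbf K)=\mathbf R\mathbf K-\mathbf B^T\mathbf P=\mathbf 0$. This establishes the equivalence, and it also gives a bijection $\mathbf K\mapsto\mathbf P_F(\mathbf K)$ between the solutions of the two equations, with inverse $\mathbf P\mapsto\mathbf R^{-1}\mathbf B^T\mathbf P$.

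\emph{Uniqueness.} A solution $\mathbf K$ is called stabilizing when $\mathbf A-\mathbf B\mathbf K$ is Hurwitz. Under the correspondence this means $\mathbf P_F(\mathbf K)$ is a stabilizing solution of the ARE. Under Assumption 1 that solution is unique and equals $\mathbf P^\ast$, so $\mathbf K=\mathbf R^{-1}\mathbf B^T\mathbf P^\ast=\mathbf K^\ast$. Conversely, $\mathbf K^\ast$ solves $\mathbf G_F(\mathbf K)=\mathbf 0$ by the converse direction and is stabilizing. Therefore $\mathbf K^\ast$ is the unique stabilizing solution.
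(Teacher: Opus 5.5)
Your proof is correct and follows the paper's route: you substitute $\mathbf B^T\mathbf P=\mathbf R\mathbf K$ into the anchored Lyapunov equation \eqref{8} to recover the ARE, use nonsingularity of $\mathcal L_F$ for the converse, and use uniqueness of the stabilizing ARE solution for the final claim. One slip: your general identity has the signs of all three $\mathbf G$-terms reversed, since with $\mathbf G=\mathbf R\mathbf K-\mathbf B^T\mathbf P$ the correct remainder is $-\mathbf G^T\mathbf R^{-1}\mathbf G+(\mathbf K-\mathbf F)^T\mathbf G+\mathbf G^T(\mathbf K-\mathbf F)$, but this is harmless because you only ever apply the identity at $\mathbf G=\mathbf 0$.
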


\begin{proof}
Let $\mathbf{G}_{F}(\mathbf{K})=\mathbf{0}$ and set
$\mathbf{P}=\mathbf{P}_{F}(\mathbf{K})$. Then
$\mathbf{B}^{T}\mathbf{P}=\mathbf{R}\mathbf{K}$. Since $\mathbf{P}$ and
$\mathbf{R}$ are symmetric, it follows that
$\mathbf{P}\mathbf{B}=\mathbf{K}^{T}\mathbf{R}$. Substituting these
identities into \eqref{8} and expanding $\mathbf{A}_{F}$ and
$\mathbf{H}_{F}(\mathbf{K})$ gives
$\mathbf{A}^{T}\mathbf{P}+\mathbf{P}\mathbf{A}
-\mathbf{K}^{T}\mathbf{R}\mathbf{K}+\mathbf{Q}=\mathbf{0}$. Together with
$\mathbf{K}=\mathbf{R}^{-1}\mathbf{B}^{T}\mathbf{P}$, this equation is
exactly \eqref{3}. Conversely, let a symmetric $\mathbf{P}$ satisfy
\eqref{3} and set $\mathbf{K}=\mathbf{R}^{-1}\mathbf{B}^{T}\mathbf{P}$.
Direct substitution shows that $\mathbf{P}$ satisfies \eqref{8}. By
uniqueness of the Lyapunov solution,
$\mathbf{P}=\mathbf{P}_{F}(\mathbf{K})$, and therefore
$\mathbf{G}_{F}(\mathbf{K})=\mathbf{0}$. The final statement follows from
the uniqueness of the stabilizing solution of \eqref{3}.
\end{proof}

Lemma 1 shows that the optimal control gain can be recovered by solving the
policy-space equation $\mathbf G_F(\mathbf K)=\mathbf 0$. However,
$\mathbf G_F(\mathbf K)$ is not directly computable from data because it
contains the unknown term $\mathbf B^T\mathbf P_F(\mathbf K)$. We now derive
an anchored integral identity in which $\mathbf P_F(\mathbf K)$ appears only
through endpoint terms, and then eliminate these terms by an endpoint
null-space projection.
 Suppose
that the behavior input $u(t)=-\mathbf{F}x(t)+v(t)$ is applied on
$\mathcal{I}_{j}=[t_{j-1},t_{j}]$, $j=1,\ldots,N$, and define
\begin{eqnarray}
\mathbf{X}_{j} &\triangleq&
\int_{t_{j-1}}^{t_{j}}x(\tau )x^{T}(\tau )\,\mathrm{d}\tau,  \label{11} \\
\mathbf{Z}_{j} &\triangleq&
\int_{t_{j-1}}^{t_{j}}v(\tau )x^{T}(\tau )\,\mathrm{d}\tau,  \label{12} \\
\mathbf{D}_{j} &\triangleq&
x(t_{j})x^{T}(t_{j})-x(t_{j-1})x^{T}(t_{j-1}). \label{13}
\end{eqnarray}
The resulting trajectory satisfies
$\dot{x}(t)=\mathbf{A}_{F}x(t)+\mathbf{B}v(t)$. For any fixed
$\mathbf K$, differentiating
$x^{T}(t)\mathbf{P}_{F}(\mathbf K)x(t)$ along this trajectory gives
\begin{eqnarray}
&&x^{T}(t_{j})\mathbf{P}_{F}(\mathbf{K})x(t_{j})
-x^{T}(t_{j-1})\mathbf{P}_{F}(\mathbf{K})x(t_{j-1})  \notag \\
&=&\int_{t_{j-1}}^{t_{j}}
x^{T}(\tau )\mathbf{H}_{F}(\mathbf{K})x(\tau )\,\mathrm{d}\tau \notag \\
&&+
2\int_{t_{j-1}}^{t_{j}}
v^{T}(\tau )\mathbf{B}^{T}\mathbf{P}_{F}(\mathbf{K})x(\tau )\,\mathrm{d}\tau .
\label{14}
\end{eqnarray}
Using
$\mathbf{B}^{T}\mathbf{P}_{F}(\mathbf{K})
=\mathbf{R}\mathbf{K}-\mathbf{G}_{F}(\mathbf{K})$, \eqref{14} becomes
\begin{eqnarray}
\!\!\!\!&&\int_{t_{j-1}}^{t_{j}}
\!\!x^{T}(\tau )\mathbf{H}_{F}(\mathbf{K})x(\tau )\,\mathrm{d}\tau
+2\int_{t_{j-1}}^{t_{j}}
\!\!\!\!v^{T}(\tau )\mathbf{R}\mathbf{K}x(\tau )\,\mathrm{d}\tau \notag \\
&=&x^{T}(t_{j})\mathbf{P}_{F}(\mathbf{K})x(t_{j})
-x^{T}(t_{j-1})\mathbf{P}_{F}(\mathbf{K})x(t_{j-1}) \notag \\
&&+
2\int_{t_{j-1}}^{t_{j}}
v^{T}(\tau )\mathbf{G}_{F}(\mathbf{K})x(\tau )\,\mathrm{d}\tau .
\label{15}
\end{eqnarray}
In \eqref{15}, the endpoint term is the only term that contains the unknown
matrix $\mathbf{P}_{F}(\mathbf K)$. This term can be eliminated by projecting
the interval identities onto the null space generated by the endpoint data.

Define
$\boldsymbol{\delta}_{j}\triangleq
\bar{x}(t_{j})-\bar{x}(t_{j-1})=\func{vech}(\mathbf D_j)$ and
\[
\boldsymbol{\mathcal{D}}\triangleq
\begin{bmatrix}
\boldsymbol{\delta}_{1} & \cdots & \boldsymbol{\delta}_{N}
\end{bmatrix}
\in \mathbb{R}^{r\times N}.
\]
Let $\mathbf W\in\mathbb R^{N\times q}$ be an orthonormal basis of
$\ker(\boldsymbol{\mathcal D})$, where
$q=N-\func{rank}(\boldsymbol{\mathcal D})$, and write its $\ell$th column as
$w_{\ell}=\func{col}(w_{1\ell},\ldots,w_{N\ell})$. Since
$w_{\ell}\in\ker(\boldsymbol{\mathcal D})$,
\[
\sum_{j=1}^{N}w_{j\ell}\boldsymbol{\delta}_{j}=\mathbf 0,
\qquad
\sum_{j=1}^{N}w_{j\ell}\mathbf D_{j}=\mathbf 0 .
\]
After multiplying \eqref{15} by $w_{j\ell}$ and summing over
$j=1,\ldots,N$, the endpoint term becomes
\begin{eqnarray}
\!\!\!\!&&\sum_{j=1}^{N}w_{j\ell}
\Bigl[
x^{T}(t_{j})\mathbf{P}_{F}(\mathbf{K})x(t_{j})
-x^{T}(t_{j-1})\mathbf{P}_{F}(\mathbf{K})x(t_{j-1})
\Bigr] \notag \\
\!\!\!\!&=&
\func{Tr}\left(
\mathbf{P}_{F}(\mathbf K)
\sum_{j=1}^{N}w_{j\ell}\mathbf D_j
\right)=\textbf{0}. \label{16}
\end{eqnarray}
\begin{remark}
The endpoint projection does not approximate the term involving
$\mathbf P_F(\mathbf K)$. It eliminates this term exactly. Indeed, every
column of $\mathbf W$ defines a linear combination of sampling intervals for
which the quadratic endpoint increment vanishes for any symmetric matrix.
Therefore, the critic-dependent endpoint component is removed independently
of the particular candidate policy $\mathbf K$.
\end{remark}
Hence, for each $\ell=1,\ldots,q$,
\begin{eqnarray}
&&\sum_{j=1}^{N}w_{j\ell}
\int_{t_{j-1}}^{t_{j}}
x^{T}(\tau )\mathbf{H}_{F}(\mathbf{K})x(\tau )\,\mathrm{d}\tau \notag \\
&&+
2\sum_{j=1}^{N}w_{j\ell}
\int_{t_{j-1}}^{t_{j}}
v^{T}(\tau )\mathbf{R}\mathbf{K}x(\tau )\,\mathrm{d}\tau \notag \\
&=&
2\sum_{j=1}^{N}w_{j\ell}
\int_{t_{j-1}}^{t_{j}}
v^{T}(\tau )\mathbf{G}_{F}(\mathbf{K})x(\tau )\,\mathrm{d}\tau .
\label{17}
\end{eqnarray}
Define the projected policy-space data map by
\begin{eqnarray}
[\boldsymbol{\Phi}_{F}(\mathbf{K})]_{\ell}
&\triangleq&
\sum_{j=1}^{N}w_{j\ell}
\int_{t_{j-1}}^{t_{j}}
x^{T}(\tau )\mathbf{H}_{F}(\mathbf{K})x(\tau )\,\mathrm{d}\tau \notag \\
&&+
2\sum_{j=1}^{N}w_{j\ell}
\int_{t_{j-1}}^{t_{j}}
v^{T}(\tau )\mathbf{R}\mathbf{K}x(\tau )\,\mathrm{d}\tau .
\label{18}
\end{eqnarray}
Then \eqref{17} gives
\begin{equation}
[\boldsymbol{\Phi}_{F}(\mathbf{K})]_{\ell}
=
2\sum_{j=1}^{N}w_{j\ell}
\int_{t_{j-1}}^{t_{j}}
v^{T}(\tau )\mathbf{G}_{F}(\mathbf{K})x(\tau )\,\mathrm{d}\tau .
\label{19}
\end{equation}
Let
\[
\bar{\mathbf Z}_{\ell}\triangleq
\sum_{j=1}^{N}w_{j\ell}\mathbf Z_j,
\qquad
\boldsymbol{\mathcal Z}\triangleq
\begin{bmatrix}
\func{vec}(\bar{\mathbf Z}_{1})^{T}\\
\vdots\\
\func{vec}(\bar{\mathbf Z}_{q})^{T}
\end{bmatrix}
\in\mathbb R^{q\times mn}.
\]
Stacking \eqref{19} over $\ell=1,\ldots,q$ yields the residual factorization
\begin{equation}
\boldsymbol{\Phi}_{F}(\mathbf{K})
=
2\boldsymbol{\mathcal Z}\func{vec}(\mathbf{G}_{F}(\mathbf{K})).
\label{20}
\end{equation}
Thus, when the projected data matrix has full column rank, the projected data
equation is equivalent to the policy-space residual equation. Before imposing a rank condition on the projected matrix
$\boldsymbol{\mathcal Z}$, we verify that this condition is independent of
the particular orthonormal basis chosen for $\ker(\boldsymbol{\mathcal D})$.
\begin{proposition}
Let $\mathbf W$ and $\widetilde{\mathbf W}$ be two orthonormal bases of
$\ker(\boldsymbol{\mathcal D})$. Let
$\boldsymbol{\Phi}_F$, $\boldsymbol{\mathcal Z}$ and
$\widetilde{\boldsymbol{\Phi}}_F$, $\widetilde{\boldsymbol{\mathcal Z}}$ be
the projected quantities induced by these two bases. Then,
$
\widetilde{\boldsymbol{\Phi}}_F(\mathbf K)=\mathbf T^T
\boldsymbol{\Phi}_F(\mathbf K)$ and
$\widetilde{\boldsymbol{\mathcal Z}}=\mathbf T^T\boldsymbol{\mathcal Z}
$
for some orthogonal matrix $\mathbf T$. Consequently,
$\func{rank}(\boldsymbol{\mathcal Z})$ and the zero set of
$\boldsymbol{\Phi}_F(\mathbf K)=\mathbf 0$ are independent of the chosen
orthonormal basis of $\ker(\boldsymbol{\mathcal D})$.
\end{proposition}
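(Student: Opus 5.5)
The plan is to relate the two bases by an orthogonal change of coordinates and then observe that both projected quantities depend linearly on the columns of the basis.

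First, since $\mathbf W$ and $\widetilde{\mathbf W}$ have orthonormal columns spanning the same subspace $\ker(\boldsymbol{\mathcal D})$ of dimension $q$, each column of $\widetilde{\mathbf W}$ is a linear combination of the columns of $\mathbf W$. Hence $\widetilde{\mathbf W}=\mathbf W\mathbf T$ for a unique $\mathbf T\in\mathbb R^{q\times q}$, namely $\mathbf T=\mathbf W^T\widetilde{\mathbf W}$, using $\mathbf W^T\mathbf W=\mathbf I_q$. Orthogonality of $\mathbf T$ follows from
\[
\mathbf T^T\mathbf T=\widetilde{\mathbf W}^T\mathbf W\mathbf W^T\widetilde{\mathbf W}=\widetilde{\mathbf W}^T\widetilde{\mathbf W}=\mathbf I_q .
\]
Here $\mathbf W\mathbf W^T$ is the orthogonal projector onto $\ker(\boldsymbol{\mathcal D})$, so it fixes each column of $\widetilde{\mathbf W}$. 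Entrywise, $\tilde w_{j\ell}=\sum_{p}w_{jp}T_{p\ell}$.

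Second, I use linearity in the weights. By \eqref{18}, $[\boldsymbol\Phi_F(\mathbf K)]_\ell=\sum_j w_{j\ell}\,c_j(\mathbf K)$, where
\[
c_j(\mathbf K)=\int_{t_{j-1}}^{t_j}x^T\mathbf H_F(\mathbf K)x\,\mathrm d\tau+2\int_{t_{j-1}}^{t_j}v^T\mathbf R\mathbf K x\,\mathrm d\tau
\]
does not depend on the basis. Collecting these as $c(\mathbf K)=\operatorname{col}(c_1,\dots,c_N)$ gives $\boldsymbol\Phi_F(\mathbf K)=\mathbf W^Tc(\mathbf K)$. Therefore
\[
\widetilde{\boldsymbol\Phi}_F(\mathbf K)=\widetilde{\mathbf W}^Tc(\mathbf K)=\mathbf T^T\mathbf W^Tc(\mathbf K)=\mathbf T^T\boldsymbol\Phi_F(\mathbf K).
\]
The same argument applies to $\boldsymbol{\mathcal Z}$: its rows are $\operatorname{vec}(\bar{\mathbf Z}_\ell)^T=\sum_j w_{j\ell}\operatorname{vec}(\mathbf Z_j)^T$. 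So $\boldsymbol{\mathcal Z}=\mathbf W^T\mathbf M$ with $\mathbf M$ the $N\times mn$ matrix whose rows are $\operatorname{vec}(\mathbf Z_j)^T$, and hence $\widetilde{\boldsymbol{\mathcal Z}}=\mathbf T^T\boldsymbol{\mathcal Z}$.

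Finally, since $\mathbf T^T$ is invertible, left multiplication by it preserves rank, so $\operatorname{rank}(\widetilde{\boldsymbol{\mathcal Z}})=\operatorname{rank}(\boldsymbol{\mathcal Z})$. It also gives $\widetilde{\boldsymbol\Phi}_F(\mathbf K)=\mathbf 0$ if and only if $\boldsymbol\Phi_F(\mathbf K)=\mathbf 0$, so the zero sets coincide.

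The only step that needs care is the first one: establishing that $\mathbf T$ exists and is orthogonal. The projector identity $\mathbf W\mathbf W^T\widetilde{\mathbf W}=\widetilde{\mathbf W}$ settles it. The degenerate case $q=0$ is trivial, since both projected quantities are then empty.
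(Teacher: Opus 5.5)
Your proposal is correct and follows essentially the same route as the paper. Both write $\widetilde{\mathbf W}=\mathbf W\mathbf T$ with $\mathbf T$ orthogonal and then use that the projected quantities are linear in the basis weights. Your version makes explicit what the paper only asserts, namely $\mathbf T=\mathbf W^T\widetilde{\mathbf W}$, its orthogonality via the projector identity, and the factorizations $\boldsymbol\Phi_F(\mathbf K)=\mathbf W^Tc(\mathbf K)$ and $\boldsymbol{\mathcal Z}=\mathbf W^T\mathbf M$.
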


\begin{proof}
Since $\mathbf W$ and $\widetilde{\mathbf W}$ are orthonormal bases of the
same subspace, there exists an orthogonal matrix $\mathbf T$ such that
$\widetilde{\mathbf W}=\mathbf W\mathbf T$. The definitions of the projected
interval sums therefore give
$\widetilde{\boldsymbol{\Phi}}_F(\mathbf K)=\mathbf T^T
\boldsymbol{\Phi}_F(\mathbf K)$ and
$\widetilde{\boldsymbol{\mathcal Z}}=\mathbf T^T\boldsymbol{\mathcal Z}$.
Since $\mathbf T$ is nonsingular, rank and nullity are preserved.
\end{proof}
\begin{assumption}
The projected data matrix $\boldsymbol{\mathcal Z}$ has full column rank, i.e.,
$\mathrm{rank}(\boldsymbol{\mathcal Z})=mn$.
\end{assumption}
\begin{remark}
Assumption~3 is a projected finite-data informativity condition. It is
directly checkable from data and is not guaranteed by persistent excitation
of $v(t)$ alone, because $\boldsymbol{\mathcal Z}$ also depends on the
state response and the endpoint projection. Since
$\boldsymbol{\mathcal Z}\in\mathbb R^{q\times mn}$, it necessarily requires
$q\ge mn$.
\end{remark}
\begin{theorem}
Under Assumptions 1 and 3, the data-driven policy-space equation
$\boldsymbol{\Phi}_{F}(\mathbf K)=\mathbf 0$ and the residual equation
$\mathbf G_F(\mathbf K)=\mathbf 0$ have the same solution set. Consequently,
$\boldsymbol{\Phi}_{F}(\mathbf K)=\mathbf 0$ admits $\mathbf K^\ast$ as its
unique stabilizing solution.
\end{theorem}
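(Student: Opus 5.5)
The proof rests on the residual factorization \eqref{20}, which holds for every $\mathbf K\in\mathbb R^{m\times n}$. It is obtained by weighting the interval identity \eqref{15} with the kernel vectors $w_\ell$ and using the exact cancellation of the endpoint term in \eqref{16}. Since \eqref{20} is an algebraic identity in $\mathbf K$, the whole argument reduces to linear algebra on the fixed data matrix $\boldsymbol{\mathcal Z}$, followed by an appeal to Lemma~1.

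First I would show that the two solution sets coincide, one inclusion at a time.

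\emph{Residual zero implies data zero.} If $\mathbf G_F(\mathbf K)=\mathbf 0$, then \eqref{20} gives $\boldsymbol{\Phi}_F(\mathbf K)=2\boldsymbol{\mathcal Z}\,\mathbf 0=\mathbf 0$. This direction needs only Assumption~1, which guarantees that $\mathbf P_F(\mathbf K)$, and hence $\mathbf G_F(\mathbf K)$, is well defined.

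\emph{Data zero implies residual zero.} If $\boldsymbol{\Phi}_F(\mathbf K)=\mathbf 0$, then \eqref{20} gives $\boldsymbol{\mathcal Z}\func{vec}(\mathbf G_F(\mathbf K))=\mathbf 0$. By Assumption~3, $\ker(\boldsymbol{\mathcal Z})=\{\mathbf 0\}$, so $\func{vec}(\mathbf G_F(\mathbf K))=\mathbf 0$.

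I would also note, citing Proposition~1, that this conclusion does not depend on the choice of orthonormal basis $\mathbf W$.

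Next I would obtain the consequence from Lemma~1. By Lemma~1, the zero set of $\mathbf G_F$ consists of the gains $\mathbf K=\mathbf R^{-1}\mathbf B^T\mathbf P$ with $\mathbf P$ a symmetric solution of the ARE \eqref{3}. Its only stabilizing element is $\mathbf K^\ast$. Because the two solution sets are equal, the stabilizing solutions of $\boldsymbol{\Phi}_F(\mathbf K)=\mathbf 0$ are exactly those of $\mathbf G_F(\mathbf K)=\mathbf 0$. Hence $\mathbf K^\ast$ is the unique stabilizing solution of $\boldsymbol{\Phi}_F(\mathbf K)=\mathbf 0$.

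The main point to check carefully is that \eqref{20} is valid uniformly in $\mathbf K$, including non-stabilizing $\mathbf K$. This holds because $\mathbf P_F(\mathbf K)$ is defined through the fixed Hurwitz matrix $\mathbf A_F$ rather than through $\mathbf A-\mathbf B\mathbf K$. Therefore \eqref{14} is valid along the behavior trajectory for every $\mathbf K$, and the endpoint cancellation \eqref{16} holds for every symmetric matrix. With that verified, the rest of the argument is immediate.
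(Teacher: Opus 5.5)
Your proposal is correct and matches the paper's proof: both inclusions follow from the factorization \eqref{20} and the full column rank of $\boldsymbol{\mathcal Z}$, and the stabilizing-uniqueness claim then follows from Lemma~1. Your extra check that \eqref{20} holds for every $\mathbf K$, including non-stabilizing ones, is a useful point that the paper leaves implicit; it holds because $\mathbf P_F(\mathbf K)$ is defined through the fixed Hurwitz matrix $\mathbf A_F$.
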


\begin{proof}
Let $\boldsymbol{\Phi}_{F}(\mathbf K)=\mathbf 0$. From \eqref{20},
$
\boldsymbol{\mathcal Z}
\func{vec}(\mathbf G_F(\mathbf K))=\mathbf 0 .
$
By Assumption 3, $\boldsymbol{\mathcal Z}$ has full column rank. Hence
$\func{vec}(\mathbf G_F(\mathbf K))=\mathbf 0$, or equivalently
$\mathbf G_F(\mathbf K)=\mathbf 0$. Conversely, if
$\mathbf G_F(\mathbf K)=\mathbf 0$, then \eqref{20} directly gives
$\boldsymbol{\Phi}_{F}(\mathbf K)=\mathbf 0$. The two equations therefore
have the same solution set. The uniqueness of the stabilizing solution
then follows from Lemma 1 and the uniqueness of the stabilizing solution of the
ARE \eqref{3}.
\end{proof}

We now construct an actor-only policy-improvement step from the projected
equation. For a fixed $\mathbf K$, the Fr\'echet derivative of
$\boldsymbol{\Phi}_{F}$ in the direction
$\mathbf E\in\mathbb R^{m\times n}$ is obtained by differentiating
$\mathbf H_F(\mathbf K)$ and the term linear in $\mathbf K$. Since
\begin{equation}
D\mathbf{H}_{F}(\mathbf{K})[\mathbf{E}]
=
\mathbf{E}^{T}\mathbf{R}(\mathbf{K}-\mathbf{F})
+
(\mathbf{K}-\mathbf{F})^{T}\mathbf{R}\mathbf{E},
\label{21}
\end{equation}
we obtain, for $\ell=1,\ldots,q$,
\begin{eqnarray}
&&D[\boldsymbol{\Phi}_{F}(\mathbf{K})]_{\ell}[\mathbf{E}]
\notag\\
&=&
\!\!\!\!2\sum_{j=1}^{N}w_{j\ell}
\!\!\int_{t_{j-1}}^{t_{j}}
\!\!\!\!\left[(\mathbf K-\mathbf F)x(\tau)+v(\tau)\right]^{T}
\!\!\mathbf R\mathbf E x(\tau)\,\mathrm d\tau .
\label{22}
\end{eqnarray}
At the $k$th iteration, we apply one Newton step to
$\boldsymbol{\Phi}_{F}(\mathbf K)=\mathbf 0$:
\begin{equation}
D\boldsymbol{\Phi}_{F}(\mathbf{K}_{k})
[\mathbf{K}_{k+1}-\mathbf{K}_{k}]
=
-\boldsymbol{\Phi}_{F}(\mathbf{K}_{k}).
\label{23}
\end{equation}
Using the definition of $\boldsymbol{\Phi}_{F}$ and collecting the terms that
multiply $\mathbf K_{k+1}$, \eqref{23} gives
\begin{eqnarray}
\!\!\!\!\!\!&&2\sum_{j=1}^{N}w_{j\ell}
\int_{t_{j-1}}^{t_{j}}
\left[(\mathbf K_k-\mathbf F)x(\tau)+v(\tau)\right]^{T}
\mathbf R\mathbf K_{k+1}x(\tau)\,\mathrm d\tau
\notag\\
\!\!\!\!\!\!&=&
\sum_{j=1}^{N}w_{j\ell}
\int_{t_{j-1}}^{t_{j}}
x^{T}(\tau)\mathbf Q_kx(\tau)\,\mathrm d\tau,  \ell=1,\ldots,q .
\label{24}
\end{eqnarray}
Equivalently,
\begin{equation}
\boldsymbol{\Gamma}_{k}\func{vec}(\mathbf{K}_{k+1})
=
\mathbf y_k ,
\label{25}
\end{equation}
where
$\bar{\mathbf X}_{\ell}\triangleq
\sum_{j=1}^{N}w_{j\ell}\mathbf X_j$,
and
\begin{eqnarray*}
\lbrack \boldsymbol{\Gamma }_{k}]_{\ell ,:} &\triangleq &2\left[ \func{vec}%
\left( \mathbf{R}\left[ (\mathbf{K}_{k}-\mathbf{F})\bar{\mathbf{X}}_{\ell }+%
\bar{\mathbf{Z}}_{\ell }\right] \right) \right] ^{T}, \\
\lbrack \mathbf{y}_{k}]_{\ell} &\triangleq &\sum_{j=1}^{N}w_{j\ell
}\int_{t_{j-1}}^{t_{j}}x^{T}(\tau )\mathbf{Q}_{k}x(\tau )\,\mathrm{d}\tau
\text{.}
\end{eqnarray*}%
Thus, the policy-improvement step \eqref{25} is computed from projected input-state
data, with $\mathbf K_{k+1}$ as the only unknown.

\begin{theorem}
Suppose Assumptions 1 and 3 hold. For every stabilizing
$\mathbf K_k$, the matrix $\boldsymbol{\Gamma}_k$ in \eqref{25} has full
column rank, i.e.,
$\func{rank}(\boldsymbol{\Gamma}_k)=mn$. Moreover, \eqref{25} admits a
unique solution, and this solution is
$\mathbf K_{k+1}=\mathbf R^{-1}\mathbf B^T\mathbf P_k$,
where $\mathbf P_k=\mathbf P_k^T$ is the unique solution of
\begin{equation}
(\mathbf A-\mathbf B\mathbf K_k)^T\mathbf P_k
+\mathbf P_k(\mathbf A-\mathbf B\mathbf K_k)
+\mathbf Q_k
=
\mathbf 0 .
\label{205}
\end{equation}
\end{theorem}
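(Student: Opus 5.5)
Our plan is to reduce \eqref{25} to a residual identity analogous to \eqref{20}, and then use Assumption~3 to transfer injectivity from $\boldsymbol{\mathcal Z}$ to $\boldsymbol{\Gamma}_k$.

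\emph{Step 1: a residual identity for a fixed stabilizing $\mathbf K_k$.} Let $\mathbf P_k$ solve \eqref{205}; it exists and is unique because $\mathbf A-\mathbf B\mathbf K_k$ is Hurwitz. Along the behavior trajectory we write
$\dot x=(\mathbf A-\mathbf B\mathbf K_k)x+\mathbf B[(\mathbf K_k-\mathbf F)x+v]$.
Differentiating $x^T\mathbf P_kx$ and integrating over $\mathcal I_j$ gives
\[
x^T(t_j)\mathbf P_kx(t_j)-x^T(t_{j-1})\mathbf P_kx(t_{j-1})
=-\int_{t_{j-1}}^{t_j}x^T\mathbf Q_kx\,\mathrm d\tau
+2\int_{t_{j-1}}^{t_j}[(\mathbf K_k-\mathbf F)x+v]^T\mathbf B^T\mathbf P_kx\,\mathrm d\tau .
\]
We multiply by $w_{j\ell}$ and sum over $j$. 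As in \eqref{16}, the endpoint term vanishes for every symmetric matrix. With $\mathbf M\triangleq\mathbf R^{-1}\mathbf B^T\mathbf P_k$ this yields exactly \eqref{24} with $\mathbf K_{k+1}$ replaced by $\mathbf M$, that is, $\boldsymbol{\Gamma}_k\func{vec}(\mathbf M)=\mathbf y_k$. Hence the Kleinman gain solves \eqref{25}, and existence is settled.

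\emph{Step 2: full column rank of $\boldsymbol{\Gamma}_k$.} This is the main obstacle. $\boldsymbol{\Gamma}_k$ involves $\bar{\mathbf X}_\ell$ as well as $\bar{\mathbf Z}_\ell$, so Assumption~3 does not apply directly. We will relate $\bar{\mathbf X}_\ell$ to $\bar{\mathbf Z}_\ell$ through the endpoint projection.

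For an arbitrary symmetric $\mathbf S$, the same computation as above along $\dot x=\mathbf A_Fx+\mathbf Bv$ and projected by $\mathbf W$ gives
\[
\sum_j w_{j\ell}\int x^T(\mathbf A_F^T\mathbf S+\mathbf S\mathbf A_F)x\,\mathrm d\tau+2\sum_j w_{j\ell}\int v^T\mathbf B^T\mathbf Sx\,\mathrm d\tau=0 .
\]
In trace form this reads $\operatorname{Tr}\big((\mathbf A_F^T\mathbf S+\mathbf S\mathbf A_F)\bar{\mathbf X}_\ell\big)+2\operatorname{Tr}(\mathbf B^T\mathbf S\bar{\mathbf Z}_\ell^T)=0$ for every $\ell$.

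Now suppose $\boldsymbol{\Gamma}_k\func{vec}(\mathbf E)=\mathbf 0$, which means $\operatorname{Tr}\big(\mathbf E^T\mathbf R[(\mathbf K_k-\mathbf F)\bar{\mathbf X}_\ell+\bar{\mathbf Z}_\ell]\big)=0$ for all $\ell$. Let $\mathbf S$ be the unique symmetric solution of
\[
(\mathbf A-\mathbf B\mathbf K_k)^T\mathbf S+\mathbf S(\mathbf A-\mathbf B\mathbf K_k)=-(\mathbf E^T\mathbf R(\mathbf K_k-\mathbf F)+(\mathbf K_k-\mathbf F)^T\mathbf R\mathbf E).
\]
Since $\mathbf A-\mathbf B\mathbf K_k=\mathbf A_F-\mathbf B(\mathbf K_k-\mathbf F)$, we can substitute this into the projected identity for $\mathbf S$ and compare it with the hypothesis on $\mathbf E$. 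The $\bar{\mathbf X}_\ell$-terms should cancel, leaving
\[
\operatorname{Tr}\big((\mathbf R\mathbf E-\mathbf B^T\mathbf S)^T\,[(\mathbf K_k-\mathbf F)\bar{\mathbf X}_\ell+\bar{\mathbf Z}_\ell]\big)\text{-type relations.}
\]
To finish, we will restate this as a single factorization, mirroring how \eqref{20} was obtained: $\boldsymbol{\Gamma}_k\func{vec}(\mathbf E)=2\boldsymbol{\mathcal Z}\func{vec}(\mathbf R\mathbf E-\mathbf B^T\mathbf S_{\mathbf E})$, where $\mathbf S_{\mathbf E}$ solves the Lyapunov equation in $\mathbf A_F$ with right-hand side $D\mathbf H_F(\mathbf K_k)[\mathbf E]$. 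This is just the linear part of \eqref{20} differentiated in $\mathbf K$, since \eqref{20} holds identically in $\mathbf K$ and \eqref{22} matches $\boldsymbol{\Gamma}_k$. Assumption~3 then gives $\mathbf R\mathbf E=\mathbf B^T\mathbf S_{\mathbf E}$. Substituting back, $\mathbf S_{\mathbf E}$ satisfies the homogeneous Lyapunov equation $(\mathbf A-\mathbf B\mathbf K_k)^T\mathbf S_{\mathbf E}+\mathbf S_{\mathbf E}(\mathbf A-\mathbf B\mathbf K_k)=\mathbf 0$. Because $\mathbf K_k$ is stabilizing, $\mathbf S_{\mathbf E}=\mathbf 0$, and therefore $\mathbf E=\mathbf 0$.

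\emph{Step 3: conclusion.} Injectivity of $\boldsymbol{\Gamma}_k$ gives uniqueness, and Step~1 identifies the unique solution as $\mathbf R^{-1}\mathbf B^T\mathbf P_k$. The delicate point is Step~2: one must verify carefully that differentiating \eqref{20} gives $D\boldsymbol{\Phi}_F(\mathbf K_k)[\mathbf E]=2\boldsymbol{\mathcal Z}\func{vec}(D\mathbf G_F(\mathbf K_k)[\mathbf E])$, and that $D\mathbf G_F(\mathbf K_k)$ is invertible for stabilizing $\mathbf K_k$. The latter is the Newton–Kleinman nonsingularity, which follows from the Lyapunov argument above.
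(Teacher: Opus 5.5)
Your final argument is correct and is essentially the paper's proof. Existence comes from projecting \eqref{26} with $\mathbf W$ so that the endpoint term vanishes. Injectivity of $\boldsymbol{\Gamma}_k$ comes from the factorization $\boldsymbol{\Gamma}_k\operatorname{vec}(\mathbf E)=2\boldsymbol{\mathcal Z}\operatorname{vec}(\mathbf R\mathbf E-\mathbf B^T\mathbf S_{\mathbf E})$ with $\mathbf S_{\mathbf E}=\boldsymbol{\Pi}=D\mathbf P_F(\mathbf K_k)[\mathbf E]$ (the paper's \eqref{28}), followed by Assumption~3 and the homogeneous Lyapunov equation in $\mathbf A-\mathbf B\mathbf K_k$.

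One caution concerns your first auxiliary $\mathbf S$, the solution of the Lyapunov equation in $\mathbf A-\mathbf B\mathbf K_k$ with right-hand side $-D\mathbf H_F(\mathbf K_k)[\mathbf E]$. It does not make the $\bar{\mathbf X}_\ell$-terms cancel; it only yields a relation mixing $\bar{\mathbf X}_\ell$ and $\bar{\mathbf Z}_\ell$, so discard it. Substituting the $\mathbf A_F$-anchored $\mathbf S_{\mathbf E}$ into your trace identity instead gives \eqref{28} directly. That is a derivative-free check of exactly the step you flagged as delicate.
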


\begin{proof}
We first verify that the claimed policy-improvement gain satisfies
\eqref{25}. Since $\mathbf K_k$ is stabilizing, the Lyapunov equation
\eqref{205} admits a unique symmetric solution $\mathbf P_k$. Set
$\mathbf K_{k+1}=\mathbf R^{-1}\mathbf B^T\mathbf P_k$. Applying
\eqref{6} on $[t_{j-1},t_j]$ under
$u(t)=-\mathbf F x(t)+v(t)$ gives
\begin{eqnarray}
&&x^{T}(t_{j})\mathbf{P}_{k}x(t_{j})
-x^{T}(t_{j-1})\mathbf{P}_{k}x(t_{j-1}) \notag\\
&=&
-\int_{t_{j-1}}^{t_{j}}
x^{T}(\tau)\mathbf Q_kx(\tau)\,\mathrm d\tau \notag\\
&&\!\!\!\!\!\!\!\!+
2\int_{t_{j-1}}^{t_{j}}
\!\!\!\!\left[(\mathbf K_k-\mathbf F)x(\tau)+v(\tau)\right]^T
\mathbf R\mathbf K_{k+1}x(\tau)\,\mathrm d\tau .
\label{26}
\end{eqnarray}
Multiplying \eqref{26} by $w_{j\ell}$ and summing over
$j=1,\ldots,N$ eliminates the endpoint term because
$\sum_{j=1}^{N}w_{j\ell}\mathbf D_j=\mathbf 0$. By the definitions of
$\boldsymbol{\Gamma}_k$ and $\mathbf y_k$, the resulting identity is exactly
the $\ell$th row of \eqref{25}. Thus, this $\mathbf K_{k+1}$ is a
solution of \eqref{25}.

It remains to prove uniqueness. Let
$\mathbf E\in\mathbb R^{m\times n}$ satisfy
$\boldsymbol{\Gamma}_k\func{vec}(\mathbf E)=\mathbf 0$. By \eqref{22}, this
is equivalent to
$D\boldsymbol{\Phi}_F(\mathbf K_k)[\mathbf E]=\mathbf 0$. Let
$\boldsymbol{\Pi}=\boldsymbol{\Pi}^T$ be the unique solution of
\begin{equation}
\mathbf A_F^T\boldsymbol{\Pi}
+\boldsymbol{\Pi}\mathbf A_F
=
\mathbf E^T\mathbf R(\mathbf K_k-\mathbf F)
+
(\mathbf K_k-\mathbf F)^T\mathbf R\mathbf E .
\label{27}
\end{equation}
This equation is well posed because $\mathbf A_F$ is Hurwitz; moreover,
$\boldsymbol{\Pi}=D\mathbf P_F(\mathbf K_k)[\mathbf E]$. Differentiating
\eqref{20} at $\mathbf K_k$ in the direction $\mathbf E$ yields
\begin{equation}
D\boldsymbol{\Phi}_F(\mathbf K_k)[\mathbf E]
=
2\boldsymbol{\mathcal Z}
\func{vec}\left(\mathbf R\mathbf E-\mathbf B^T\boldsymbol{\Pi}\right).
\label{28}
\end{equation}
Since $D\boldsymbol{\Phi}_F(\mathbf K_k)[\mathbf E]=\mathbf 0$ and
$\boldsymbol{\mathcal Z}$ has full column rank by Assumption 3,
$
\mathbf R\mathbf E-\mathbf B^T\boldsymbol{\Pi}=\mathbf 0 .
$
By symmetry of $\mathbf R$ and $\boldsymbol{\Pi}$, this also gives
$\mathbf E^T\mathbf R=\boldsymbol{\Pi}\mathbf B$. Substituting these two
identities into \eqref{27} gives
\begin{equation}
\mathbf A_F^T\boldsymbol{\Pi}
+\boldsymbol{\Pi}\mathbf A_F
=
\boldsymbol{\Pi}\mathbf B(\mathbf K_k-\mathbf F)
+
(\mathbf K_k-\mathbf F)^T\mathbf B^T\boldsymbol{\Pi}.
\label{208}
\end{equation}
Using $\mathbf A_F=\mathbf A-\mathbf B\mathbf F$, \eqref{208} is equivalent
to
$
(\mathbf A-\mathbf B\mathbf K_k)^T\boldsymbol{\Pi}
+
\boldsymbol{\Pi}(\mathbf A-\mathbf B\mathbf K_k)
=
\mathbf 0 .
$
Since $\mathbf K_k$ is stabilizing, this Lyapunov equation has the
unique symmetric solution $\boldsymbol{\Pi}=\mathbf 0$. Hence
$\mathbf R\mathbf E=\mathbf 0$, and $\mathbf R>\textbf{0}$ implies
$\mathbf E=\mathbf 0$. Therefore, $\ker(\boldsymbol{\Gamma}_k)=\{\mathbf 0\}$.
Because $\boldsymbol{\Gamma}_k$ has $mn$ columns, it has full column rank.
Consequently, the solution of \eqref{25} is unique, and it is the Kleinman
gain $\mathbf K_{k+1}=\mathbf R^{-1}\mathbf B^T\mathbf P_k$.
\end{proof}

\begin{corollary}
Under the conditions of Theorem\ 2, Algorithm 1 generates the Kleinman
sequence. Hence, every $\mathbf{K}_{k}$ is stabilizing,
$\mathbf{P}^{\ast}\leq \mathbf{P}_{k+1}\leq \mathbf{P}_{k}$, and
$\lim_{k\rightarrow \infty}\mathbf{K}_{k}=\mathbf{K}^{\ast}$.
\end{corollary}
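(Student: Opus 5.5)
The plan is to argue by induction on $k$ that the iterates of Algorithm 1 coincide with the model-based Kleinman sequence. Then the classical Kleinman results are invoked for stability, monotonicity and convergence.

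\emph{Base case.} Algorithm 1 is initialized with a stabilizing gain $\mathbf K_0$; if it starts from the anchor, $\mathbf K_0=\mathbf F$, which is stabilizing by Assumption 1.

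\emph{Inductive step.} Suppose $\mathbf K_k$ is stabilizing. Theorem 2 gives $\func{rank}(\boldsymbol\Gamma_k)=mn$, so the least-squares problem \eqref{25} has a unique solution. This solution is exactly $\mathbf K_{k+1}=\mathbf R^{-1}\mathbf B^T\mathbf P_k$, where $\mathbf P_k$ solves \eqref{205}. The key point is that Assumption 3 involves only $\boldsymbol{\mathcal Z}$, which is built from the fixed batch of data collected under $\mathbf F$. It therefore does not depend on $k$, and the hypotheses of Theorem 2 are available at every iteration once $\mathbf K_k$ is stabilizing. Hence the data-driven update reproduces the Kleinman update exactly. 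No approximation is involved, because the estimate equals the exact solution whenever the linear system has full column rank and is consistent; consistency was shown in Theorem 2.

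\emph{Stability of $\mathbf K_{k+1}$.} To close the induction we need the standard Kleinman fact that $\mathbf K_{k+1}$ is again stabilizing. I would recall the usual argument. Write
\begin{equation*}
(\mathbf A-\mathbf B\mathbf K_{k+1})^T\mathbf P_k+\mathbf P_k(\mathbf A-\mathbf B\mathbf K_{k+1})=-\mathbf Q-\mathbf K_{k+1}^T\mathbf R\mathbf K_{k+1}-(\mathbf K_{k+1}-\mathbf K_k)^T\mathbf R(\mathbf K_{k+1}-\mathbf K_k),
\end{equation*}
note that $\mathbf P_k\ge\mathbf 0$, and use detectability of $(\mathbf A,\mathbf Q^{1/2})$. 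An eigenvector argument then excludes eigenvalues of $\mathbf A-\mathbf B\mathbf K_{k+1}$ with nonnegative real part. This detectability step, rather than any data-driven issue, is the main technical point; here I would simply cite Kleinman's theorem.

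\emph{Monotonicity and convergence.} Subtracting the Lyapunov equations for $\mathbf P_k$ and $\mathbf P_{k+1}$ gives
\begin{equation*}
(\mathbf A-\mathbf B\mathbf K_{k+1})^T(\mathbf P_k-\mathbf P_{k+1})+(\mathbf P_k-\mathbf P_{k+1})(\mathbf A-\mathbf B\mathbf K_{k+1})=-(\mathbf K_{k+1}-\mathbf K_k)^T\mathbf R(\mathbf K_{k+1}-\mathbf K_k),
\end{equation*}
so $\mathbf P_{k+1}\le\mathbf P_k$. An analogous comparison with the ARE \eqref{3} yields $\mathbf P^\ast\le\mathbf P_{k+1}$. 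A monotone sequence bounded below converges to some limit $\bar{\mathbf P}$. This limit satisfies \eqref{3} and is stabilizing, so by uniqueness of the stabilizing solution, equivalently by Lemma 1, $\bar{\mathbf P}=\mathbf P^\ast$. Therefore $\mathbf K_k\to\mathbf K^\ast$.
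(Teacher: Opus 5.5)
Your proposal is correct and follows the paper's route. Theorem~2 identifies each least-squares update with the Kleinman gain $\mathbf R^{-1}\mathbf B^T\mathbf P_k$ starting from the stabilizing $\mathbf K_0=\mathbf F$, Kleinman's classical result supplies stability, monotonicity and convergence, and your explicit induction and remark that Assumption~3 is independent of $k$ spell out what the paper leaves implicit.

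One small refinement: stability of $\mathbf K_{k+1}$ does not need detectability. Your eigenvector argument already forces $(\mathbf K_{k+1}-\mathbf K_k)z=\mathbf 0$, so $z$ is an eigenvector of $\mathbf A-\mathbf B\mathbf K_k$ with the same eigenvalue, contradicting Hurwitzness. Detectability is really needed to show the limit $\bar{\mathbf P}$ is stabilizing. You assert that without argument, but it is covered by citing Kleinman.
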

\begin{proof}
By Theorem 2, the unique solution of \eqref{25} is the Kleinman
policy-improvement gain
$\mathbf K_{k+1}=\mathbf R^{-1}\mathbf B^T\mathbf P_k$. Since
$\mathbf K_0=\mathbf F$ is stabilizing by Assumption 1, the standard
properties of Kleinman iteration \cite{kleinman1968iterative} imply that all subsequent gains are
stabilizing, the sequence $\{\mathbf P_k\}$ is monotonically nonincreasing
and bounded below by $\mathbf P^\ast$, and
$\mathbf K_k\to\mathbf K^\ast$.
\end{proof}
\begin{algorithm}[t]
\caption{Critic-Free Policy Iteration}
\renewcommand{\algorithmicrequire}{\textbf{Input:}}
\renewcommand{\algorithmicensure}{\textbf{Output:}}
\begin{algorithmic}[1]
\REQUIRE Stabilizing gain $\mathbf{F}$, exploration signal $v$, and
$\varepsilon>0$.
\ENSURE Optimal gain $\mathbf K^{\ast}$.
\REPEAT
\STATE Apply $u(t)=-\mathbf{F}x(t)+v(t)$ and augment the data set with
$\mathbf{X}_{j}$, $\mathbf{Z}_{j}$, and $\mathbf{D}_{j}$ from
\eqref{11}--\eqref{13}.
\STATE Construct $\boldsymbol{\mathcal{D}}$ and an orthonormal basis
$\mathbf{W}$ of $\ker(\boldsymbol{\mathcal{D}})$.
\STATE Construct $\bar{\mathbf{X}}_{\ell}$ and
$\bar{\mathbf{Z}}_{\ell}$.
\UNTIL{$\func{rank}(\boldsymbol{\mathcal{Z}})=mn$}
\STATE Set $\mathbf{K}_{0}\gets\mathbf{F}$ and $k\gets0$.
\REPEAT
\STATE Compute
$\func{vec}(\mathbf{K}_{k+1})\gets
(\boldsymbol{\Gamma}_{k}^{T}\boldsymbol{\Gamma}_{k})^{-1}
\boldsymbol{\Gamma}_{k}^{T}\mathbf{y}_{k}.$
\STATE Set $k\gets k+1$.
\UNTIL{$\Vert\mathbf{K}_{k}-\mathbf{K}_{k-1}\Vert\leq\varepsilon$}
\end{algorithmic}
\end{algorithm}

\subsection{Rank Decomposition of the Conventional Off-Policy Regression}

We next decompose the full-rank condition of the conventional off-policy
regression in \eqref{7}. This decomposition shows that the endpoint
projection separates the critic endpoint component from the actor component,
and that Assumption~3 is precisely the full-rank requirement on the projected
actor component.

Under the behavior input $u(t)=-\mathbf F x(t)+v(t)$, write
$
\boldsymbol{\Theta}_k=
\begin{bmatrix}
\boldsymbol{\delta}_{xx} & \mathbf C_k
\end{bmatrix},
$
where $\mathbf C_k\in\mathbb R^{N\times mn}$ denotes the actor block in
$\boldsymbol{\Theta}_k$. Its $j$th row is
\begin{equation}
[\mathbf C_k]_{j,:}
=
-2\left[
\func{vec}\left(
\mathbf R\left[
(\mathbf K_k-\mathbf F)\mathbf X_j+\mathbf Z_j
\right]\right)
\right]^T .
\label{30}
\end{equation}
Since $\boldsymbol{\delta}_{xx}=\boldsymbol{\mathcal D}^T$ and the columns
of $\mathbf W$ form an orthonormal basis of
$\ker(\boldsymbol{\mathcal D})$, we have
$
\mathbf W^T\boldsymbol{\delta}_{xx}=\mathbf 0 .
$
Moreover, by the definitions of $\bar{\mathbf X}_{\ell}$,
$\bar{\mathbf Z}_{\ell}$, and $\boldsymbol{\Gamma}_k$,
$
\mathbf W^T\mathbf C_k=-\boldsymbol{\Gamma}_k .
$

\begin{theorem}
Let $\mathbf K_k$ be stabilizing and define
$d\triangleq\func{rank}(\boldsymbol{\mathcal D})$. Then the following
statements hold.

\begin{enumerate}
\item[(i)] The regression matrices satisfy
\begin{equation}
\func{rank}(\boldsymbol{\Theta}_k)
=
d+\func{rank}(\boldsymbol{\Gamma}_k)
=
d+\func{rank}(\boldsymbol{\mathcal Z}).
\label{31}
\end{equation}

\item[(ii)] If
$\func{rank}(\boldsymbol{\Theta}_k)=r+mn$, then
\begin{equation}
\func{rank}(\boldsymbol{\mathcal D})=r,
\qquad
\func{rank}(\boldsymbol{\mathcal Z})=mn .
\label{32}
\end{equation}

\item[(iii)] If
$\func{rank}(\boldsymbol{\mathcal D})=r$ and
$\func{rank}(\boldsymbol{\mathcal Z})=mn$, then
\begin{equation}
\func{rank}(\boldsymbol{\Theta}_k)=r+mn .
\label{33}
\end{equation}
\end{enumerate}
\end{theorem}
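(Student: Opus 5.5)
Part (i) rests on the orthogonal decomposition $\mathbb R^N=\operatorname{im}(\boldsymbol{\mathcal D}^T)\oplus\ker(\boldsymbol{\mathcal D})$. Let $\mathbf U\in\mathbb R^{N\times d}$ be an orthonormal basis of $\operatorname{im}(\boldsymbol{\mathcal D}^T)=\operatorname{im}(\boldsymbol{\delta}_{xx})$. Then $\mathbf M\triangleq[\mathbf U\ \mathbf W]$ is an $N\times N$ orthogonal matrix, because $d+q=N$. Left multiplication by $\mathbf M^T$ preserves rank, and
\[
\mathbf M^T\boldsymbol{\Theta}_k=
\begin{bmatrix}
\mathbf U^T\boldsymbol{\delta}_{xx} & \mathbf U^T\mathbf C_k\\
\mathbf 0 & -\boldsymbol{\Gamma}_k
\end{bmatrix},
\]
using $\mathbf W^T\boldsymbol{\delta}_{xx}=\mathbf 0$ and $\mathbf W^T\mathbf C_k=-\boldsymbol{\Gamma}_k$. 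This alone does not give the rank, since a block-triangular form only bounds it. Instead, I will compute $\ker(\boldsymbol{\Theta}_k)$ directly. Suppose $\boldsymbol{\delta}_{xx}a+\mathbf C_kb=\mathbf 0$. Applying $\mathbf W^T$ gives $\boldsymbol{\Gamma}_kb=\mathbf 0$. Conversely, if $\boldsymbol{\Gamma}_kb=\mathbf 0$, then $\mathbf C_kb\in\ker(\mathbf W^T)=\operatorname{im}(\boldsymbol{\delta}_{xx})$, so some $a$ exists, and the set of such $a$ is an affine translate of $\ker(\boldsymbol{\delta}_{xx})$. Therefore
\[
\dim\ker(\boldsymbol{\Theta}_k)=\dim\ker(\boldsymbol{\delta}_{xx})+\dim\ker(\boldsymbol{\Gamma}_k)=(r-d)+(mn-\operatorname{rank}\boldsymbol{\Gamma}_k),
\]
and rank--nullity on $r+mn$ columns yields $\operatorname{rank}(\boldsymbol{\Theta}_k)=d+\operatorname{rank}(\boldsymbol{\Gamma}_k)$. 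Here $\operatorname{rank}(\boldsymbol{\delta}_{xx})=\operatorname{rank}(\boldsymbol{\mathcal D})=d$.

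The second equality in (i), $\operatorname{rank}(\boldsymbol{\Gamma}_k)=\operatorname{rank}(\boldsymbol{\mathcal Z})$, is the main obstacle. Theorem 2 gives it only when $\boldsymbol{\mathcal Z}$ has full column rank, so a general argument is needed. I will compare kernels using \eqref{28}: for every $\mathbf E$,
\[
\boldsymbol{\Gamma}_k\operatorname{vec}(\mathbf E)=2\boldsymbol{\mathcal Z}\operatorname{vec}(\mathbf R\mathbf E-\mathbf B^T\boldsymbol{\Pi}(\mathbf E)),
\]
where $\boldsymbol{\Pi}(\mathbf E)$ solves \eqref{27}. This identity comes from differentiating the residual factorization \eqref{20}, which holds identically in $\mathbf K$, together with \eqref{22}. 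Define the linear map $\mathcal T_k:\mathbf E\mapsto\mathbf R\mathbf E-\mathbf B^T\boldsymbol{\Pi}(\mathbf E)$ on $\mathbb R^{m\times n}$. The uniqueness part of the proof of Theorem 2 shows $\mathcal T_k(\mathbf E)=\mathbf 0\Rightarrow\mathbf E=\mathbf 0$ whenever $\mathbf K_k$ is stabilizing, and that argument never used Assumption 3. Hence $\mathcal T_k$ is injective and, being square, is an isomorphism. Writing $\boldsymbol{\Gamma}_k=2\boldsymbol{\mathcal Z}\mathbf T_k$ with $\mathbf T_k$ the invertible matrix representing $\mathcal T_k$ in vec coordinates gives $\operatorname{rank}(\boldsymbol{\Gamma}_k)=\operatorname{rank}(\boldsymbol{\mathcal Z})$. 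I will state this as a short lemma and then finish (i).

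Parts (ii) and (iii) then follow by counting. For (ii), note $d\le r$ and $\operatorname{rank}(\boldsymbol{\mathcal Z})\le mn$. If $d+\operatorname{rank}(\boldsymbol{\mathcal Z})=r+mn$, both bounds must be attained, which is \eqref{32}. For (iii), substituting $d=r$ and $\operatorname{rank}(\boldsymbol{\mathcal Z})=mn$ into \eqref{31} gives \eqref{33}.
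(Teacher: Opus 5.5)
Your proposal is correct and follows the paper's proof essentially step for step. It uses the same orthogonal split $[\mathbf U\ \mathbf W]$ with $\mathbf W^T\boldsymbol\delta_{xx}=\mathbf 0$ and $\mathbf W^T\mathbf C_k=-\boldsymbol\Gamma_k$, the same factorization $\boldsymbol\Gamma_k=2\boldsymbol{\mathcal Z}\mathbf J_G(\mathbf K_k)$ (your $\mathbf T_k$) obtained by differentiating the residual factorization \eqref{20} with invertibility from the Lyapunov argument (which, as you note, never uses Assumption~3), and the same counting for (ii)--(iii). The only local difference is the rank-additivity step: the block-triangular form is not merely a bound here, because $\mathbf U^T\boldsymbol\delta_{xx}$ has full row rank $d$, so $\mathbf U^T\mathbf C_k$ can be cleared by column operations, which is how the paper concludes; your kernel-dimension count is an equally valid substitute.
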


\begin{proof}
We first prove statement (i). Since
$\boldsymbol{\delta}_{xx}=\boldsymbol{\mathcal D}^T$, we have
$\func{rank}(\boldsymbol{\delta}_{xx})=d$. Let
$\mathbf U\in\mathbb R^{N\times d}$ have orthonormal columns spanning
$\func{im}(\boldsymbol{\delta}_{xx})$. Since the columns of $\mathbf W$ form
an orthonormal basis of $\ker(\boldsymbol{\delta}_{xx}^T)$, the matrix
$
\mathbf T\triangleq
\begin{bmatrix}
\mathbf U & \mathbf W
\end{bmatrix}
$
is orthogonal. Premultiplication by $\mathbf T^T$ therefore preserves rank,
and
\begin{equation}
\mathbf T^T\boldsymbol{\Theta}_k
=
\begin{bmatrix}
\mathbf U^T\boldsymbol{\delta}_{xx} & \mathbf U^T\mathbf C_k\\
\mathbf 0 & \mathbf W^T\mathbf C_k
\end{bmatrix}.
\label{34}
\end{equation}
Because $\mathbf U^T\boldsymbol{\delta}_{xx}$ has full row rank $d$, the
block upper-triangular form in \eqref{34} gives
\begin{equation}
\func{rank}(\boldsymbol{\Theta}_k)
=
d+\func{rank}(\mathbf W^T\mathbf C_k)
=
d+\func{rank}(\boldsymbol{\Gamma}_k).
\label{35}
\end{equation}

It remains to prove that
$\func{rank}(\boldsymbol{\Gamma}_k)=
\func{rank}(\boldsymbol{\mathcal Z})$. Define
$\mathbf J_G(\mathbf K)\in\mathbb R^{mn\times mn}$ by
\begin{equation}
\func{vec}\left(D\mathbf G_F(\mathbf K)[\mathbf E]\right)
=
\mathbf J_G(\mathbf K)\func{vec}(\mathbf E).
\label{36}
\end{equation}
Differentiating \eqref{20} at $\mathbf K_k$ and using \eqref{22} yields
$
\boldsymbol{\Gamma}_k
=
2\boldsymbol{\mathcal Z}\mathbf J_G(\mathbf K_k).
$
We now show that $\mathbf J_G(\mathbf K_k)$ is nonsingular. Suppose
$D\mathbf G_F(\mathbf K_k)[\mathbf E]=\mathbf 0$, and set
$\boldsymbol{\Pi}\triangleq D\mathbf P_F(\mathbf K_k)[\mathbf E]$. Then
\begin{equation}
\mathbf A_F^T\boldsymbol{\Pi}
+\boldsymbol{\Pi}\mathbf A_F
=
\mathbf E^T\mathbf R(\mathbf K_k-\mathbf F)
+
(\mathbf K_k-\mathbf F)^T\mathbf R\mathbf E .
\label{37}
\end{equation}
Moreover,
$D\mathbf G_F(\mathbf K_k)[\mathbf E]=\mathbf 0$ implies
$\mathbf R\mathbf E=\mathbf B^T\boldsymbol{\Pi}$ and, by symmetry,
$\mathbf E^T\mathbf R=\boldsymbol{\Pi}\mathbf B$. Substituting these
relations into \eqref{37} gives
\begin{equation}
(\mathbf A-\mathbf B\mathbf K_k)^T\boldsymbol{\Pi}
+
\boldsymbol{\Pi}(\mathbf A-\mathbf B\mathbf K_k)
=
\mathbf 0 .
\label{38}
\end{equation}
Since $\mathbf K_k$ is stabilizing, \eqref{38} has the unique symmetric
solution $\boldsymbol{\Pi}=\mathbf 0$. Hence
$\mathbf R\mathbf E=\mathbf 0$, and $\mathbf R>0$ implies
$\mathbf E=\mathbf 0$. Therefore
$\ker(\mathbf J_G(\mathbf K_k))=\{\mathbf 0\}$, so
$\mathbf J_G(\mathbf K_k)$ is nonsingular. This proves
$\func{rank}(\boldsymbol{\Gamma}_k)=
\func{rank}(\boldsymbol{\mathcal Z})$, and statement (i) follows from
\eqref{35}.

For statement (ii), suppose
$\func{rank}(\boldsymbol{\Theta}_k)=r+mn$. By statement (i),
\[
r+mn=d+\func{rank}(\boldsymbol{\mathcal Z}).
\]
Since $d\leq r$ and
$\func{rank}(\boldsymbol{\mathcal Z})\leq mn$, this equality implies
\[
(r-d)+\bigl(mn-\func{rank}(\boldsymbol{\mathcal Z})\bigr)=\textbf{0} .
\]
Both terms are nonnegative; hence both must be zero. Thus
$d=r$ and $\func{rank}(\boldsymbol{\mathcal Z})=mn$, which proves
\eqref{32}. Statement (iii) follows directly by substituting
$d=r$ and $\func{rank}(\boldsymbol{\mathcal Z})=mn$ into \eqref{31}.
\end{proof}
\begin{corollary}
Suppose Assumptions~1 and~3 hold and
$\func{rank}(\boldsymbol{\mathcal D})<r$. Then, for every stabilizing
$\mathbf K_k$, the critic-free actor equation \eqref{25} uniquely determines
the Kleinman policy-improvement gain, whereas the conventional joint
regression \eqref{7} is rank deficient.
\end{corollary}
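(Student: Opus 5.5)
The corollary follows by combining Theorem 2 and Theorem 3(i); no new estimates are needed. I would split it into its two claims and handle each with one earlier result.

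For the first claim, fix any stabilizing $\mathbf K_k$. Assumptions~1 and~3 are exactly the hypotheses of Theorem~2. That theorem therefore gives $\func{rank}(\boldsymbol{\Gamma}_k)=mn$, and it shows that \eqref{25} has the unique solution $\mathbf K_{k+1}=\mathbf R^{-1}\mathbf B^T\mathbf P_k$ with $\mathbf P_k$ from \eqref{205}. Theorem~2 nowhere uses $\func{rank}(\boldsymbol{\mathcal D})=r$. The only place $\boldsymbol{\mathcal D}$ enters is through $\mathbf W$, and the elimination identity $\sum_j w_{j\ell}\mathbf D_j=\mathbf 0$ holds for any kernel basis. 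So the actor equation still determines the Kleinman gain uniquely when $d<r$. I would add one remark: Assumption~3 forces $q\ge mn$, so $\ker(\boldsymbol{\mathcal D})$ is nontrivial and $\mathbf W$ exists.

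For the second claim, I would apply Theorem~3(i) with $d=\func{rank}(\boldsymbol{\mathcal D})<r$. This gives $\func{rank}(\boldsymbol{\Theta}_k)=d+\func{rank}(\boldsymbol{\mathcal Z})$. Assumption~3 makes the second term $mn$, so $\func{rank}(\boldsymbol{\Theta}_k)=d+mn<r+mn$. Since $\boldsymbol{\Theta}_k$ has $r+mn$ columns, it lacks full column rank, and the joint regression \eqref{7} cannot uniquely determine $(\mathrm{vecs}(\mathbf P_k),\func{vec}(\mathbf K_{k+1}))$.

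Equivalently, the contrapositive of Theorem~3(ii) gives the same conclusion. I would also make the failure concrete: any nonzero symmetric $\mathbf S$ with $\mathrm{vecs}(\mathbf S)$ orthogonal to the rows of $\boldsymbol{\delta}_{xx}$ can be added to $\mathbf P_k$ without changing the left side of \eqref{7}. Such an $\mathbf S$ exists because $\boldsymbol{\delta}_{xx}$ has only $d<r$ independent rows (up to the fixed diagonal scaling between $\mathrm{vech}$ and $\mathrm{vecs}$). The critic therefore cannot be identified.

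No step is a real obstacle. The point needing the most care is confirming that Theorem~2's proof never uses $d=r$, which holds because it uses only the kernel property of $\mathbf W$ and Assumption~3.
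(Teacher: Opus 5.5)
Your proposal is correct and follows the paper's proof. Theorem~2 gives the unique Kleinman solution of \eqref{25}, and Theorem~3(i) together with Assumption~3 gives $\func{rank}(\boldsymbol{\Theta}_k)=d+mn<r+mn$. Your extra null-vector observation is also valid: adding to $\mathbf P_k$ a nonzero symmetric $\mathbf S$ with $\boldsymbol{\delta}_{xx}\mathrm{vecs}(\mathbf S)=\mathbf 0$ leaves \eqref{7} unchanged, which shows that the rank deficiency of \eqref{7} needs only $\func{rank}(\boldsymbol{\mathcal D})<r$ and not Assumption~3.
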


\begin{proof}
By Theorem~2, \eqref{25} has the unique solution
$\mathbf K_{k+1}=\mathbf R^{-1}\mathbf B^T\mathbf P_k$. Let
$d=\func{rank}(\boldsymbol{\mathcal D})<r$. Since Assumption~3 gives
$\func{rank}(\boldsymbol{\mathcal Z})=mn$, Theorem~3 yields
\[
\func{rank}(\boldsymbol{\Theta}_k)
=
d+\func{rank}(\boldsymbol{\mathcal Z})
=
d+mn
<r+mn .
\]
Thus, the conventional regression matrix in \eqref{7} does not have full
column rank and cannot uniquely determine the joint critic--actor unknown.
\end{proof}
\begin{remark}
Theorem~3 and Corollary~2 show that Assumption~3 is the projected actor
informativity condition hidden inside the conventional full-rank condition.
The conventional condition additionally requires the endpoint critic
component $\boldsymbol{\mathcal D}$ to have full row rank. Hence, endpoint
projection removes precisely the rank requirement needed for identifying the
critic, while retaining the rank requirement needed for identifying the
policy-improvement gain.
\end{remark}
\subsection{Computational Complexity Analysis}

We compare the per-iteration computational cost of the dense least-squares
solve used in the conventional off-policy regression and in the proposed
critic-free update. For a least-squares problem with $\eta$ equations and
$\Xi$ unknowns, where $\eta\geq \Xi$, a  dense solve has leading
complexity $\mathcal O(\eta\Xi^2)$ \cite{prashanth2014fast}.

In the conventional regression \eqref{7}, the unknown vector contains both
the critic variable $\mathrm{vecs}(\mathbf P_k)$ and the actor variable
$\func{vec}(\mathbf K_{k+1})$.  When the conventional full-rank condition holds, one
necessarily has $N\ge r+mn$, and the corresponding per-iteration
least-squares cost is $\mathcal O\bigl(N(r+mn)^2\bigr).$
By contrast, after the endpoint projection has been constructed, the
critic-free update \eqref{25} contains only the actor variable
$\func{vec}(\mathbf K_{k+1})\in\mathbb R^{mn}$. The projected equation has
$
q=N-\func{rank}(\boldsymbol{\mathcal D})
$
rows, and Assumption~3 requires $q\ge mn$. Hence, the per-iteration
least-squares cost of the critic-free update is
$
\mathcal O\bigl(q(mn)^2\bigr).
$
Thus, when both regressions are well posed on the same data batch, the ratio
between the conventional and critic-free least-squares costs is
\[
\frac{N}{q}
\left(
\frac{r+mn}{mn}
\right)^2
=
\frac{N}{q}
\left(
1+\frac{n+1}{2m}
\right)^2 .
\]
The saving becomes more pronounced when the number of states is large
relative to the number of inputs.

The construction of $\boldsymbol{\mathcal D}$, the null-space basis
$\mathbf W$, and the projected data matrices is performed only once after
data collection and is not repeated during the PI loop.
Moreover, if $\func{rank}(\boldsymbol{\mathcal D})<r$, the conventional
joint regression is rank deficient, whereas the critic-free update can still
be well posed under Assumption~3, as shown in Corollary~2.

\section{Illustrative Examples}

This section provides numerical examples to demonstrate the effectiveness and computational efficiency of the proposed critic-free PI method. We first compare it with existing off-policy methods on a benchmark system, and then examine its scalability over different system dimensions.

\textit{Example 1: Comparison with Existing Off-Policy Methods.} We compare the proposed critic-free Algorithm~1 with two existing
off-policy methods, namely the off-policy PI method in
\cite{jiang2012computational} and the off-policy Q-learning method in
\cite{li2019policy}. The turbocharged diesel engine with exhaust gas recirculation
studied in \cite{jiang2012computational} is considered. The system
parameters, weighting matrices, initial stabilizing gain, exploration
signal, and data-collection settings are chosen as in
\cite{jiang2012computational}.

The comparison results are shown in Fig.~1. The gain-error curves indicate
that Algorithm~1 and the off-policy PI method generate almost identical
gain trajectories and converge within five iterations. The off-policy
Q-learning method also converges to a comparable gain accuracy, but requires
more regression parameters and higher per-iteration computation.

The computational advantage of Algorithm~1 is due to its critic-free
structure. In this example, Algorithm~1 estimates only $mn=12$ actor
parameters, whereas the off-policy PI method estimates $r+mn=33$
critic--actor parameters and the off-policy Q-learning method estimates
$48$ unknown parameters. This reduction is also reflected in the timing
benchmark: Algorithm~1 takes $0.1888$ ms per iteration, compared with
$0.4419$ ms for the off-policy PI method and $0.8677$ ms for the
off-policy Q-learning method. Thus, Algorithm~1 achieves comparable gain
accuracy with the smallest regression dimension and the lowest
per-iteration computational cost. The singular-value plot confirms that all
three methods are evaluated under their respective full-rank data
conditions.
\begin{figure}[t]
\centering
\includegraphics[width=1.1\linewidth]{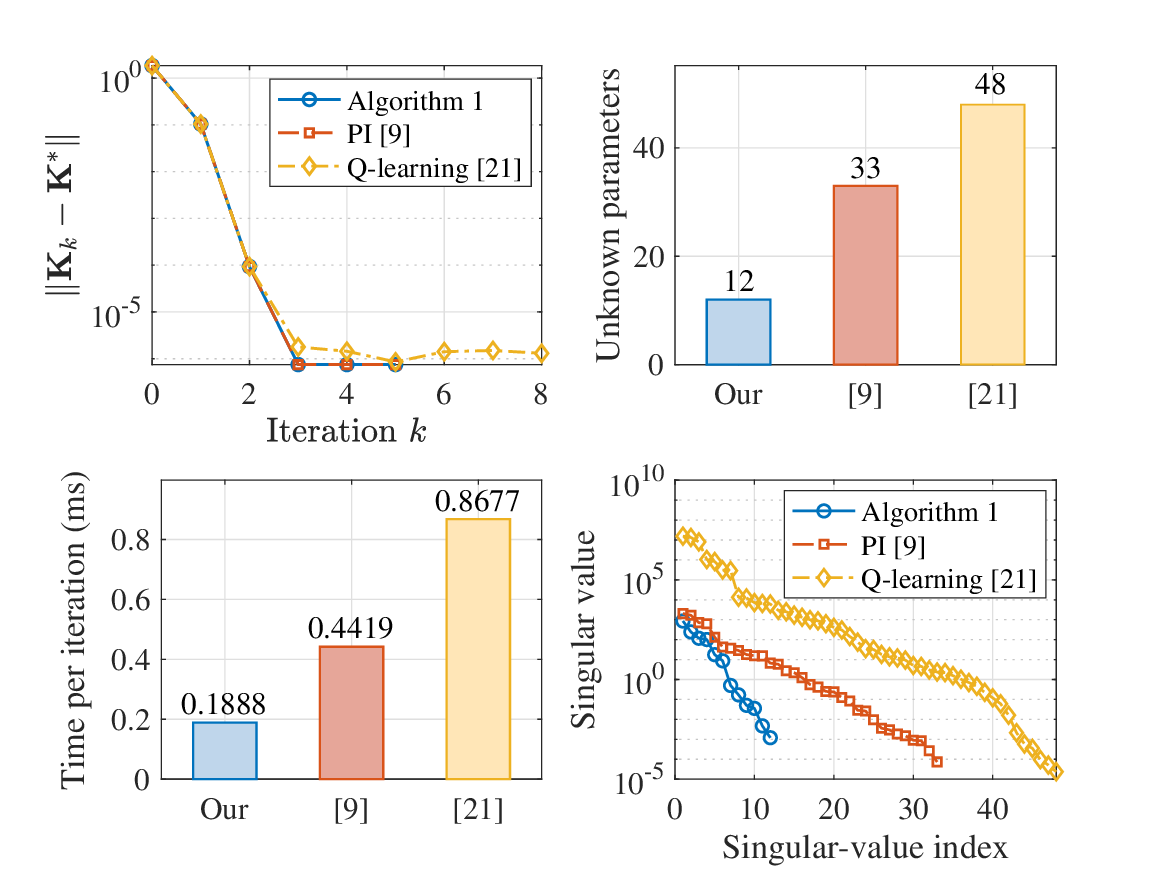}
\caption{Comparison with the off-policy PI method in \cite{jiang2012computational} and the off-policy Q-learning method in \cite{li2019policy}. Algorithm 1 attains comparable gain convergence with the smallest regression dimension and the lowest per-iteration computational cost under valid full-rank data conditions.}
\end{figure}

\textit{Example 2: Computational Scalability.}
We next evaluate the computational scalability of the proposed critic-free
update. The state dimension and input dimension are chosen as
$n=10,12,\ldots,28$ and $m=1,2,\ldots,10$, respectively. For each pair
$(n,m)$, system parameters are randomly generated. Specifically,
a random matrix $\mathbf{\widetilde A}$ is shifted to the open left-half plane by
setting
\[
\mathbf{A}=\mathbf{\widetilde A}-\left(\max_i \operatorname{Re}\lambda_i(\mathbf{\widetilde A})+0.5\right)I_n,
\]
and the input matrix $\mathbf{B}$ is obtained from a randomly generated matrix with
orthonormal columns, scaled by $1/\sqrt n$. The weighting matrices are
chosen as $\mathbf{Q}=I_n$ and $\mathbf{R}=I_m$. Since $\mathbf{A}$ is Hurwitz, the initial stabilizing
gain is set as $\mathbf{F}=\mathbf{0}$. All rank conditions required by the two
methods are checked before running the iterative updates.

Fig. 2 reports the runtime per iteration. The proposed
Algorithm~1 consistently requires less computation time than the conventional
off-policy PI method. This is consistent with the complexity analysis:
Algorithm~1 estimates only the $mn$ actor parameters, whereas the
conventional regression estimates both the critic and actor variables with
dimension $r+mn$. The runtime gap becomes more evident when the state
dimension is large relative to the input dimension.

\begin{figure}[t]
\centering
\includegraphics[width=0.9\linewidth]{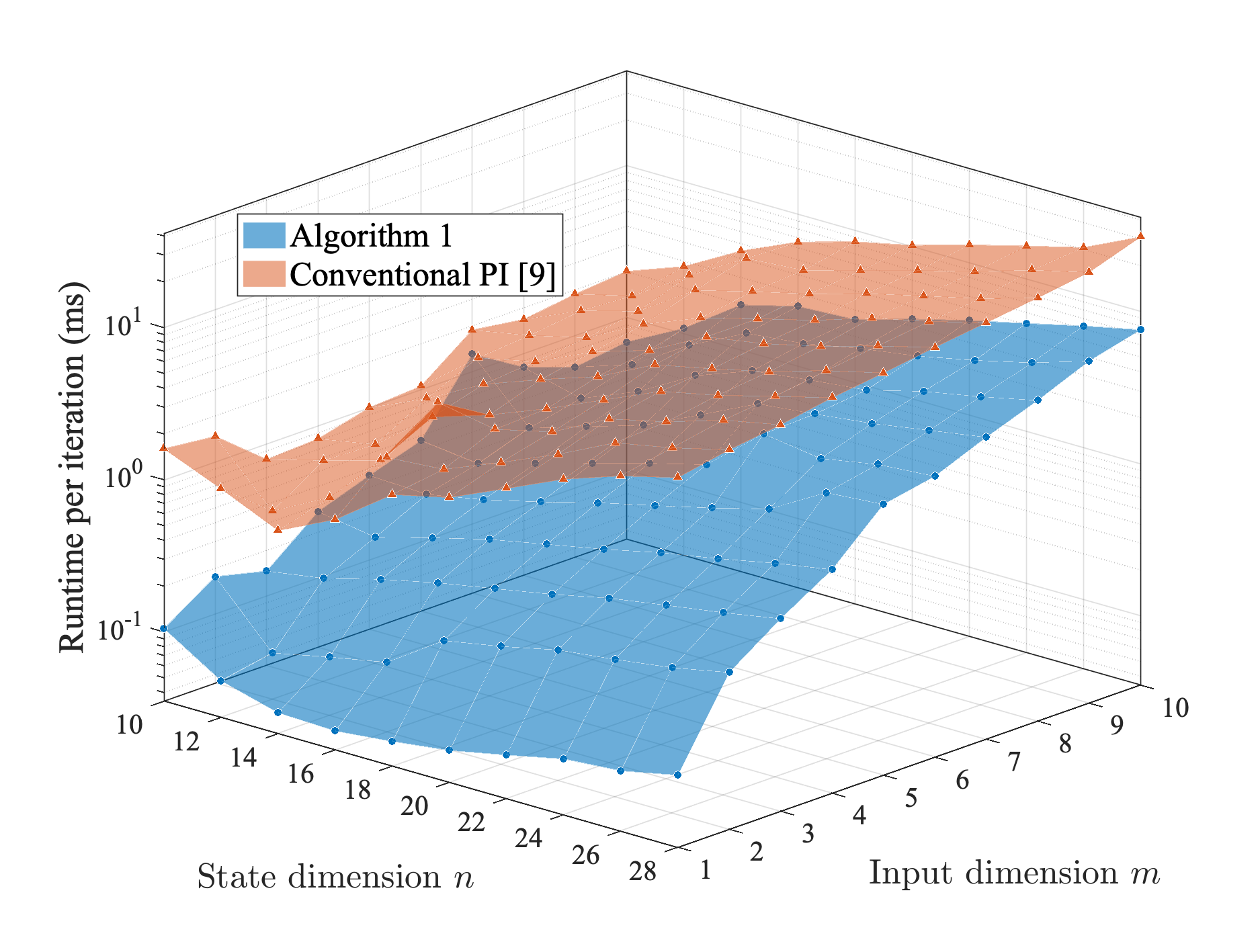}
\caption{Per-iteration runtime comparison between Algorithm~1 and the
conventional off-policy PI method in \cite{jiang2012computational}. Each
point corresponds to the runtime for a fixed pair $(n,m)$,
using the same system and data batch for both methods. The lower runtime of
Algorithm~1 reflects its reduced regression dimension from $r+mn$ to $mn$.}
\end{figure}
\section{Conclusion}

This paper showed that the critic variable in conventional off-policy PI is
not essential for computing the optimal control policy.
The anchored formulation and endpoint projection reveal that the data
informativity required for policy improvement can be separated from that
required for value-matrix identification. In particular, the projected actor
rank condition is sufficient to recover the Kleinman update, whereas the
conventional joint regression additionally requires endpoint critic
informativity. This separation explains both the reduction of the repeated
regression dimension from $n(n+1)/2+mn$ to $mn$ and the ability of the
critic-free update to remain well posed when the conventional regression is
rank deficient. Comparative simulations further confirmed that the proposed
implementation preserves the convergence behavior of off-policy PI while
reducing the per-iteration computational burden.

\bibliographystyle{IEEEtran}
\bibliography{shwj-workwjc}

\end{document}